\documentclass{article}

% if you need to pass options to natbib, use, e.g.:
%     \PassOptionsToPackage{numbers, compress}{natbib}
% before loading neurips_2021

% ready for submission
% \usepackage{neurips_2021}
% \usepackage[preprint]{neurips_2021}

% to compile a preprint version, e.g., for submission to arXiv, add add the
% [preprint] option:
%     \usepackage[preprint]{neurips_2021}
\usepackage[a4paper, total={6in, 8in}]{geometry}

% to compile a camera-ready version, add the [final] option, e.g.:
%     \usepackage[final]{neurips_2021}

% to avoid loading the natbib package, add option nonatbib:
%    \usepackage[nonatbib]{neurips_2021}

\usepackage[utf8]{inputenc} % allow utf-8 input
\usepackage[T1]{fontenc}    % use 8-bit T1 fonts
\usepackage{hyperref}       % hyperlinks
\usepackage{url}            % simple URL typesetting
\usepackage{booktabs}       % professional-quality tables
\usepackage{amsfonts}       % blackboard math symbols
\usepackage{nicefrac}       % compact symbols for 1/2, etc.
\usepackage{microtype}      % microtypography
\usepackage{xcolor} 
% colors
\newcommand{\tva}{\mathbf{TV}_a}
\newcommand{\adv}{\mathbf{Adv}}
\newcommand{\tv}{\mathbf{TV}}
\newcommand{\cN}{\mathcal{N}}
\newcommand{\nrm}[2]{\lVert #1 \rVert_{#2}}
\newcommand{\pfix}[2]{{#1}_{\leq #2}}
\newcommand{\set}[1]{\{#1\}}
\usepackage{microtype}
\usepackage{graphicx}
\usepackage{subfigure}
\usepackage{booktabs} % for professional tables
\usepackage{natbib}

% hyperref makes hyperlinks in the resulting PDF.
% If your build breaks (sometimes temporarily if a hyperlink spans a page)
% please comment out the following usepackage line and replace
% \usepackage{icml2021} with \usepackage[nohyperref]{icml2021} above.
\usepackage{hyperref}

% Attempt to make hyperref and algorithmic work together better:

% Use the following line for the initial blind version submitted for review:
% \usepackage{icml2022}

\usepackage{amsmath}
\usepackage{amssymb}
\usepackage{amsthm}
\usepackage{appendix}
\usepackage{bbm}
\usepackage{mathtools}
\newcounter{thm}

\newtheorem{theorem}[thm]{Theorem}
\newtheorem{lemma}[thm]{Lemma}

\newtheorem{definition}[thm]{Definition}
\newtheorem{remark}[thm]{Remark}

\newcommand\numberthis{\addtocounter{equation}{1}\tag{\theequation}}
\newcommand{\prob}{\mathbb{P}}

\newcommand{\R}{\mathbb{R}}
\newcommand{\E}{\mathbb{E}}

\newcommand{\calM}{\mathcal{M}}
\newcommand{\calA}{\mathcal{A}}

\newcommand{\Supp}{\mathsf{Supp}}

\DeclarePairedDelimiterX{\infdivx}[2]{(}{)}{%
  #1\;\delimsize\|\;#2%
}
\newcommand{\rdp}{D_\alpha\infdivx}
\newcommand{\kl}{\text{KL}\infdivx}
\usepackage{xcolor}

\newcommand{\erf}{\mathrm{erf}}

\date{}
\title{\textbf{Optimal Membership Inference Bounds for Adaptive Composition of Sampled Gaussian Mechanisms}}

\author{

\text{Saeed Mahloujifar}\footnote{Equal contribution}\\
\it{Princeton University}\\
\text{sfar@princeton.edu}
\and
\text{Alexandre Sablayrolles}$^\ensuremath{\ast}$\\
\it{Meta AI}\\
\text{asablayrolles@fb.com }
\and
\text{Graham Cormode}\\
\it{Meta AI}\\
\text{gcormode@fb.com}
\and
\text{Somesh Jha}\\
\it{University of Wisconsin}\\
\text{jha@cs.wisc.edu}}

% The \author macro works with any number of authors. There are two commands
% used to separate the names and addresses of multiple authors: \And and \AND.
%
% Using \And between authors leaves it to LaTeX to determine where to break the
% lines. Using \AND forces a line break at that point. So, if LaTeX puts 3 of 4
% authors names on the first line, and the last on the second line, try using
% \AND instead of \And before the third author name.

\begin{document}

\maketitle
% \footnotetext[1]{Equal contribution}

\begin{abstract}
Given a trained model and a data sample, membership-inference (MI) attacks predict whether the sample was in the model's training set.
A common countermeasure against MI attacks is to utilize differential privacy (DP) during model training to mask the presence of individual examples.
While this use of DP is a principled approach to limit the efficacy of MI attacks, 
there is a gap between the bounds provided by DP and the empirical performance of MI attacks.
In this paper, we derive bounds for the \textit{advantage} of an adversary mounting a MI attack, 
and demonstrate tightness for the widely-used Gaussian mechanism.
We further show bounds on the \textit{confidence} of MI attacks. Our bounds are much stronger than those obtained by DP analysis. For example, analyzing a setting of DP-SGD with $\epsilon=4$ would obtain an upper bound on the advantage of $\approx0.36$ based on our analyses, while getting bound of $\approx 0.97$ using the analysis of previous work that convert $\epsilon$ to membership inference bounds.

Finally, using our analysis, we provide MI metrics for models trained on  CIFAR10 dataset. To the best of our knowledge, our analysis provides state-of-the-art membership inference bounds for the privacy.
\end{abstract}

\section{Introduction}

The recent success of machine learning models make them the go-to approach to solve a variety of problems, ranging from computer vision~\citep{krizhevsky2012imagenet} to NLP~\citep{sutskever2014sequence}, including applications to sensitive data such as health records or chatbots. 
Access to a trained machine learning model, through a black-box API or a white-box access to a published model, can leak traces of information~\citep{dwork2015robust} from the training data. Researchers have tried to measure this information leakage through metrics such as membership inference~\citep{shokri2017membership}.
Membership inference is the task of guessing, from a trained model, whether it includes a given sample or not.
This task is both interesting in its own right, as the participation of an individual in a data collection can be a sensitive information.
It also serves as the ``most significant bit'' of information: if membership inference fails, attacks revealing more information such as reconstruction attacks~\citep{fredrikson2014privacy,carlini2020extracting} will also fail. In other words, defending against membership inference attacks would also defend against attacks such as reconstruction attacks that aim at reconstructing training examples.

The standard approach to provably defeat these membership privacy attacks is differential privacy~\citep{dwork2006calibrating}. Differential privacy defines a class of training algorithms that respect a privacy budget $\epsilon$ and a probability of failure $\delta$.
These quantities quantify how much information about each individual training example is revealed by the output of the algorithm. Most algorithms obtaining differential privacy need to inject noise somewhere in their process. The amount of injected noise then creates a trade-off between privacy utility of the trained model. To measure the privacy of a given algorithms, researchers have developed advanced mathematical tools and notions such as Renyi differential privacy \cite{mironov2017renyi,abadi16deep} and advanced composition theorems \cite{dwork2010boosting, kairouz2015composition}. These tools allow us to calculate $(\epsilon,\delta)$ values for carefully designed algorithms.

Previous work has shown that any deferentially private algorithm will provably bound the accuracy of any membership inference adversary. Specifically, starting with a given $(\epsilon,\delta)$, \cite{humphries2020differentially} prove that any model trained with $(\epsilon,\delta)$ differential privacy will induce an upper bound on the accuracy of membership adversary and this upper bound depends only on $\epsilon$ and $\delta$. These upper bounds enables us to obtain provable defenses against membership inference attacks by using deferentially private algorithms. In fact, there is a large gap between upper bounds proved for the power of adversaries in performing membership inference, and the power of real adversaries that try to attack deferentially private models.  One hypothesis is that the membership inference bound obtained by differential privacy is in reality stronger than what we could prove. In particular, the process of obtaining differential privacy bounds and then converting those bounds to membership inference bounds could be sub-optimal . In this work we ask the following question:

\begin{quote}
\begin{center}
\it{
Can we develop tools to directly and optimally analyze the membership inference upper bounds for algorithms, without going through differential privacy?}
\end{center}
\end{quote}

\paragraph{Our contributions:} Our contributions in this work are as follows:
\begin{itemize}
    \item \textbf{Membership inference bounds for composition of sampled Gaussian mechanisms} Our main theorem bounds the membership inference advantage of any adversary the composition of an arbitrary set of sampled Gaussian mechanisms that could adaptive depend on each other. Specifically, for any adaptive series of sampled Gaussian mechanism $(M_1,\dots,M_T)$ where $M_i$ has sensitivity $1.0$ and standard deviation $\sigma_i$ and sub-sampling rate $q_i$, we show that the membership inference advantage of any adversary is bounded by the total variation distance between two mixture of Gaussians defined by $\sigma=(\sigma_1,\dots,\sigma_T)$ and $q=(q_1,\dots,q_T)$. This bound is optimal as it reflects the membership inference advantage for a real adversary on a particular series of Gaussian mechanisms. 
    \item We propose a numerical way to calculate the total variation distance between mixture of Gaussians. Our algorithm is computationally tractable and works in linear time with respect to number of mechanisms and is independent from the dimension. We use our numerical approach to obtain concrete bounds on membership inference for Gaussian mechanisms and compare our bounds to that of \cite{humphries2020differentially}. 
    \item Finally, to understand the practical implication of our bound for mainstream datasets, we use DP-SGD to train models on CIFAR10 and calculate the membership inference bounds using our techniques. Our approach allows us to achieve state-of-the-art provable membership inference privacy for any given accuracy. 
\end{itemize}

The most widely-used DP algorithm in machine learning applications is DP-SGD: it is a small change of the classical stochastic gradient descent algorithm that only requires to clip per-sample gradients, average them and add Gaussian noise. DP-SGD has been shown to be more accurate than other differentially private training algorithms in the case of linear and convex models~\citep{van2020trade}. 
Each iteration of DP-SGD is an instance of the sampled Gaussian mechanism, which chooses a fraction $q$ of a dataset and outputs a noisy sum of the desired quantity.
Our analysis mirrors the recent shift in the field of empirical membership inference, from advantage (or accuracy) metrics~\citep{shokri2017membership,yeom2018privacy,sablayrolles2019white} to precision/recall measures~\citep{watson2021importance,carlini2020extracting}.
Differential privacy guarantees are known to yield tight true positive and false positive rates~\citep{nasr2021adversary}. 
Our paper is the first to show equivalent results in the case of advantage.
% In this paper, we show how a privacy budget $\epsilon$ translates into guarantees of the \textit{advantage} of a membership inference algorithm. 
% Our bound is tight.
Our analysis explains, from a theoretical point of view, why the precision and recall of membership inference attacks is stronger than the accuracy.

\begin{figure}\label{fig:eps_adv}
    \centering
    \includegraphics[width=0.69\textwidth]{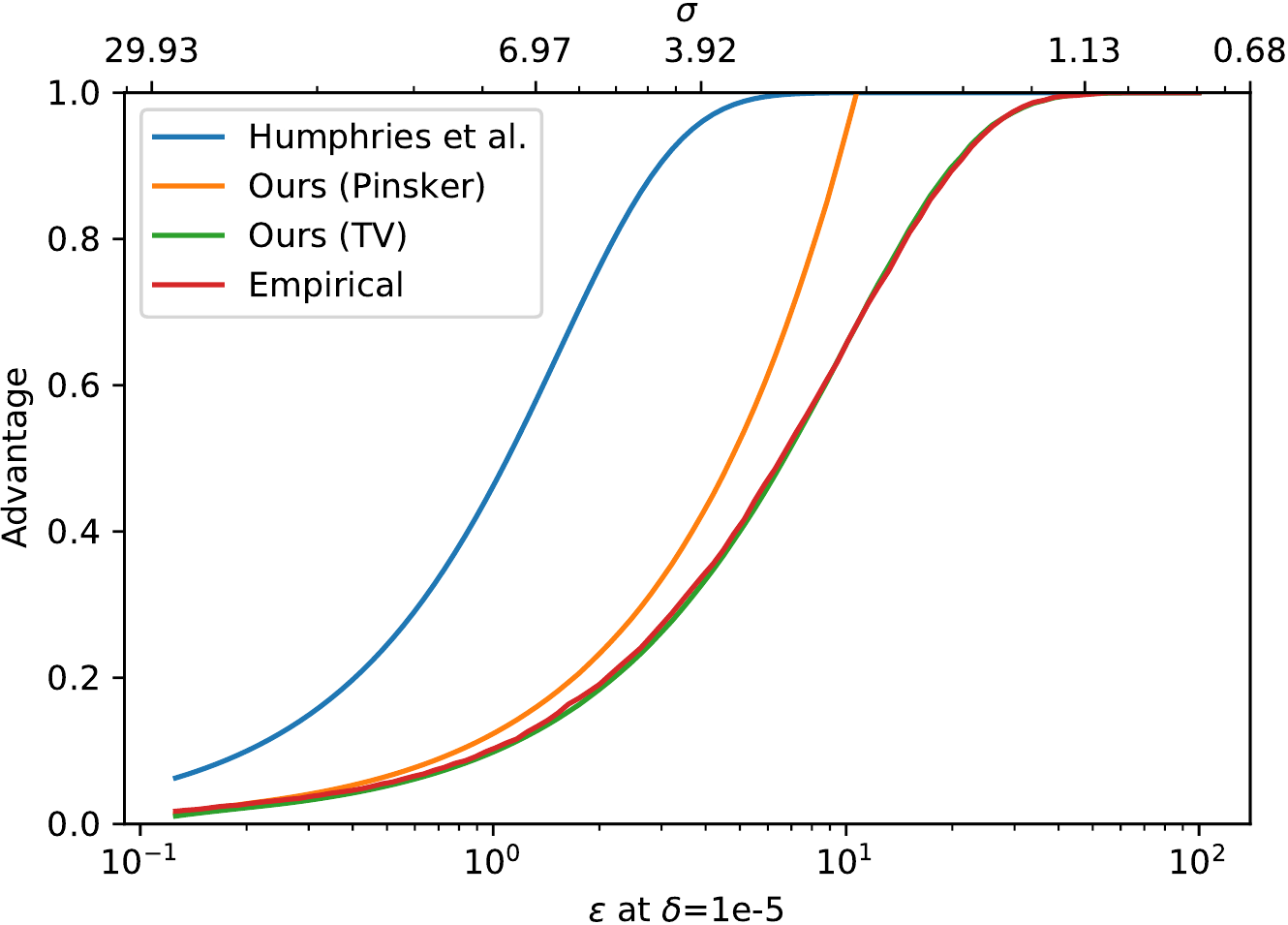}
    \caption{
    \label{fig:bounds}
    Bounds on membership advantage.
    Our bounds are tighter than \citet{humphries2020differentially} and match the empirical advantage.
    All bounds computed with a Gaussian mechanism with $C=1$ and varying $\sigma$.
    }
    
\end{figure}

\begin{figure}\label{fig:acc_adv}
    \centering
    \includegraphics[width=0.69\textwidth]{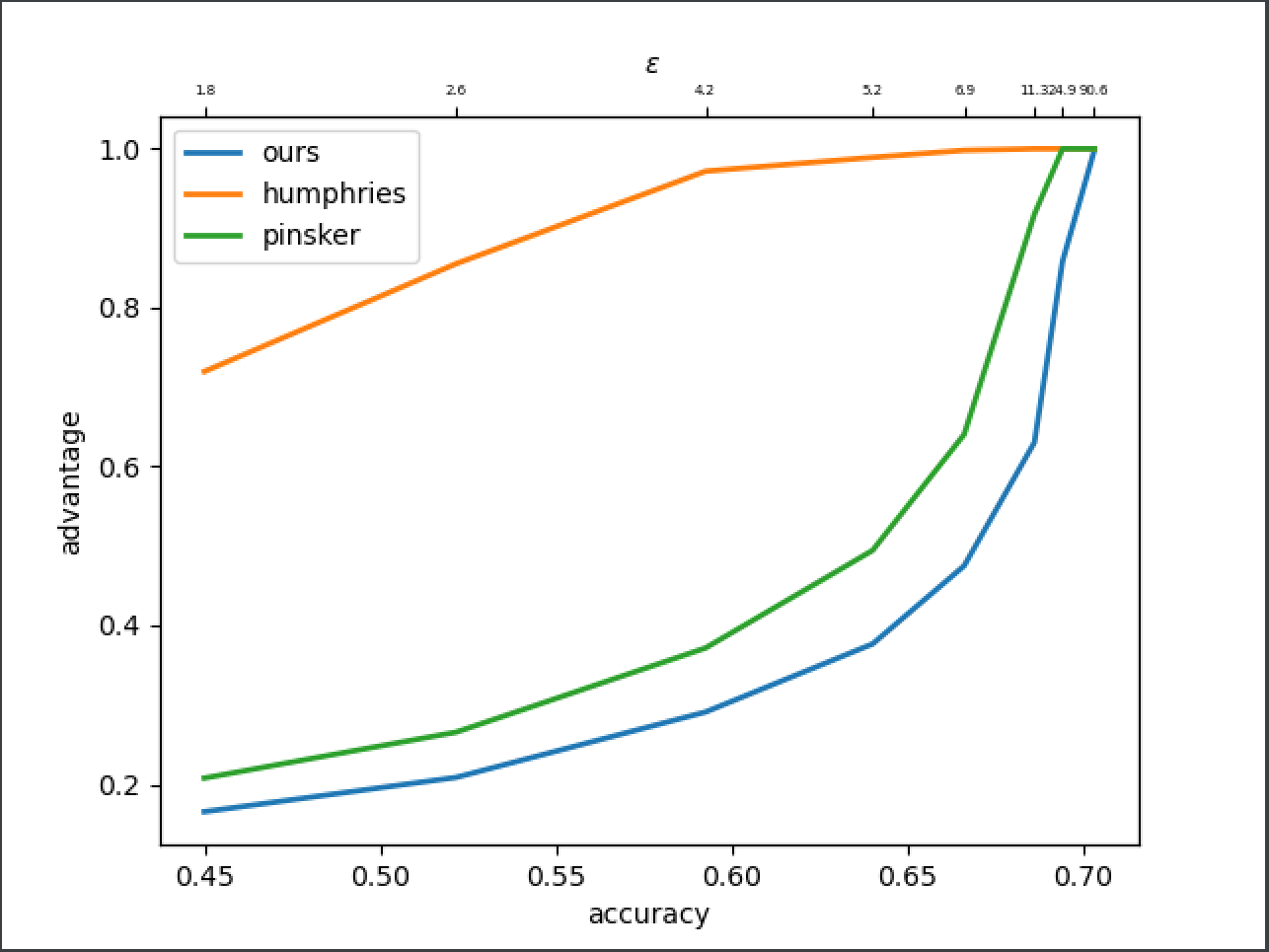}
    \caption{
    \label{fig:bounds}
    Experiments on CIFAR-10.
    This plot shows the effect of our improvement on the accuracy-privacy trade-off. All experiments are trained using dp-sgd for 50 epochs with sub-sample rate of $0.02$ and clipping threshold $10.0$, while varying the noise multiplier. 
    }
    
\end{figure}

\section{Background}
In this section we provide all the background information necessary for understanding our main theorems, proofs, and algorithms.

\subsection{Total variation, KL divergence, and Pinsker inequality}
Here, we recall two notion of distance between probability distribution, KL divergence and total variation distance. 

\paragraph{Total variation distance (TV):}
Total variation distance between two probability distributions $X$ and $Y$ is defined as:
$$\tv(X,Y)=\sup_\calA \Pr[X\in \calA] - \Pr[Y\in \calA]$$
\paragraph{TV properties.} 
We recall briefly some properties of TV that will be useful in the remainder of this paper. 
We first note the following characterization of TV:
\begin{equation}
    \tv(X, Y) = \frac{1}{2} \sup_{f: E \to [-1, 1]} \int f dX - \int f dY .
\end{equation}
%\graham{Is the absolute value necessary when we are taking the supremum?}\saeed{No, it's not neccessary. removed it.}

Numerous properties about $\tv$ can be derived from this characterization.
In particular, we are interested in post-processing.
Given a function $g: E \to F$, applying $g$ to the samples from $X$ and $Y$ can only decrease $\tv$:
\begin{align*}
    \tv(g(X), g(Y)) \leq \tv(X, Y).
\end{align*}
In particular if $g$ is a bijective transform, we have $\tv(g(X), g(Y)) = \tv(X, Y)$.

\paragraph{Kullback-Leibler (KL) divergence:} The KL divergence between two probability distributions $X$ and $Y$ defined over a emasurebale space $\Omega$ is defined as:

$$\kl{X}{Y}=\int_\Omega \log(\frac{dX}{dY})dX.$$

Now we are ready to state Pinsker inequality that connects total variation to KL divergence:

\paragraph{Pinsker's Inequality.} For any two probability distributions $X$ and $Y$ we have
 $$\tv(X,Y)\leq \sqrt{\frac{\kl{X}{Y}}{2}}.$$
\subsection{Differential Privacy}

\paragraph{Differential privacy.} A randomized algorithm $\calM$ satisfies $(\epsilon, \delta)$-differential privacy if, for any two datasets $D$ and $D'$ that differ in at most one sample, and for any subset $R$ of the output space, we have:
\begin{equation}
    \prob(\calM(D) \in R) \leq \exp(\epsilon) \prob(\calM(D') \in R) + \delta.
\end{equation}
While the probability of failure $\delta$ is often chosen to be inversely proportional to the number of samples, there is no consensus in the literature over desirable values of $\epsilon$.

\paragraph{Rényi differential privacy (RDP)} is a stronger definition of privacy.
For two probability distributions $P$ and $Q$  defined over $\mathcal R$, the Rényi divergence of order $\alpha > 1$ is 
\begin{equation}
    \rdp{P}{Q} \triangleq \frac{1}{\alpha - 1}\log~†\E_{x\sim Q}\left(\frac{P(x)}{Q(x)}\right)^\alpha,
\end{equation}
with $D_1$ defined by continuity $D_1 \infdivx{P}{Q} = \kl{P}{Q}$.

A randomized mechanism $f\colon \mathcal D \to \mathcal R$ satisfies $(\alpha, \epsilon)$-Rényi differential privacy (RDP) if, for any adjacent datasets $D$ and  $D'$, we have
\begin{equation}
    \rdp{f(D)}{f(D')} \leq \epsilon.
\end{equation}

Rényi divergence enjoys nice properties: it is non-decreasing in $\alpha$, and $\lim_{\alpha \to 1} \rdp{P}{Q} = \kl{P}{Q}$. 

\paragraph{DP-SGD with the subsampled Gaussian mechanism.}
DP-SGD~\citep{abadi16deep} is a modification of Stochastic Gradient Descent that makes it differentially private. 
The core mechanism behind it is the subsampled Gaussian mechanism.
Given a function $\phi$ that operates on sets $S$ with sensitivity $1$ (i.e. $ \| \phi(S) - \phi(S') \|_2 \leq 1$), the subsampled Gaussian mechanism selects sets $S$ as random and adds noise to the output  of $\phi$. 
Note that the mechanism is differentially private \emph{even} if all intermediate steps of the training process are revealed.

\paragraph{RDP accounting for DP-SGD.} \citep{abadi16deep,mironov2017renyi} propose methods to account for RDP for Gaussian mechanism. The implementations of DP-SGD \cite{Opacus} have these accounting procedures. This is important for us as we use these accounting methods to calculate one of the bounds we prove for membership inference for composition of sub-sampled Gaussian mechanims.

\subsection{Membership inference attacks}

\paragraph{Membership Inference} is the task of predicting whether a given sample was in the training set of a given model.
\citet{homer2008resolving} showed the first proof of concept, and \citet{shokri2017membership} showed that a wide variety of machine learning models are vulnerable to such attacks. 
\citet{shokri2017membership} train neural networks to attack machine learning models, and measure the success of the attack by the percentage of correctly predicted (train/test) samples, or equivalently the advantage~\citep{yeom2018privacy}. 
While \citet{shokri2017membership} trained neural networks to attack machine learning models, it was shown later that simple heuristics such as the loss~\citep{yeom2018privacy,sablayrolles2019white} is a more accurate and robust measure of membership inference.

Recent works~\citep{watson2021importance,carlini2021membership,rezaei2021difficulty} have proposed to evaluate membership inference by the precision/recall trade-off~\citep{watson2021importance} or the precision at low levels of recall~\citep{carlini2021membership}. 
In particular, such works show that some setups which were thought to be private because the membership accuracy is significantly less than $100\%$ can actually reveal membership of a small group of samples with very high precision.

There is also a line of work developed to design algorithms to specifically defend against membership inference attacks \cite{}. Although differential privacy would bound the membership inference, these empirical defenses are potentially able to achieve better utility while withstanding against existing membership inference attacks.

\section{Membership inference and total variation}
In this section, we define our security game for membership inference and show its connection to the notion of total variation distance (or statistical distance) between probability distributions.

% \begin{figure}
%     \centering
%       \dummyfig{Dummy Figure Label} 
%     \caption{Plot of theoretical bound}
%     \label{fig:my_label}
% \end{figure}

\subsection{Security Game}

We adopt the classical assumptions of membership inference~\citep{yeom2018privacy, sablayrolles2019white, humphries2020differentially}.
% We assume that data is sampled i.i.d. from a distribution $\calD$: $z_1, \dots, z_T \sim_{\text{i.i.d.}} \calD$, 
We assume that data is assembled in a fixed set $D=\{z_1, \dots, z_m\}$ (resp. $D'=\{z_1, \dots, z_m, z'\}$), 
and a model $\theta$ is produced by a training algorithm $\calM$: $\theta \sim \calM(z_1, \dots, z_m)$.

Similar to \citet{humphries2020differentially}, we study in particular a more powerful adversary who knows $\theta$, $z_1, \dots, z_m$, $z'$ and wants to know whether $z'$ was used in training  $\theta$. Specifically, we use the following security game between an adversary and a challenger to measure membership inference advantage.

\begin{enumerate}
    \item Adversary picks a datasets $D=\set{z_1,\dots,z_T}$ and a data point $z'$
    \item Challenger samples a bit $b$ uniformly at random and creates $$D'=\left\{
	\begin{array}{ll}
		D \cup \set{z'}  & \mbox{if } b = 1 \\
		D & \mbox{if } b = 0
	\end{array}
\right.$$
\item Challenger learns a model $\theta$ by running $L(D')$ and sends $\theta$ to adversary.
\item Adversary guess a bit $b'$ and wins if $b'=b$.
\end{enumerate}
We then define the advantage of adversary $A$ on learning algorithm $L$ to be 
$$\adv(L,A)=2\cdot\Pr[b=b']-1.$$
We also use 

$$\adv(L)=\sup_A \adv(A,L)$$
to denote the advantage of the worst adversary against algorithm $L$.

\begin{remark}Note that we are using the notion of add/remove for neighboring datasets where the two datasets are exactly the same except that one of them has one less example. In the rest of paper, wherever we report advantage, we report it for this setting (including when we discuss the analysis of the previous works). To convert this advantage to the advantage defined for notion of neighboring datasets with \emph{replacement}, we can just double the advantage of add/remove setting. Note that doubling the advantage can potentially lead to values greater than 1, in which case the bound will be vacuous. 
\end{remark}

%\graham{"Doubling the advantage" was unclear to me.  Don't we expect the advantage to be at least $\frac12$?}\saeed{You're right, the definition of advantage was wrong. See the new definition above?}

\begin{remark}
In section \ref{sec:membership_inference_bounds} we prove bounds on the membership advantage for composition of Gaussian mechanisms with and without sub-sampling. Both of these bounds are tight for the advantage defined based on addition/removal. However, for the notion of advantage defined based on replacement, if we double the bounds, only the bound without replacement will remain tight. We leave the question of obtaining tight bounds on the advantage based on replacement for the sub-sampled Gaussian as an open question.
\end{remark}

% The adversary observes the model $\theta$ and wants to distinguish between two cases: $\theta \sim P$ and $\theta \sim Y$. 
Let $Z$ be a random variable that corresponds to the output of the challenger in step 3 of the security game. Also let $X:= Z\mid b=0$ and $Y:=Z\mid b=1$. A deterministic adversary $A$ defines a region $\calA$ and predicts that $\theta$ is sampled from $X$ if $\theta \in \calA$ and from $Y$ if $\theta \notin \calA$. We use $X(\calA)$ and $Y(\calA)$ to denote $\Pr[X\in \calA]$ and $\Pr[Y\in \calA]$. For such an adversary we have $$\adv(L,A)=|\Pr[X\in \calA] - \Pr[Y\in \calA]|.$$
% Denoting by $X(A):=\prob(\calM(z_1, \dots, z_T) \in A)$ and $Y(A):=\prob(\calM(z_1, \dots, z') \in A)$

Note that with a simple averaging argument we can show that the best adversarial strategy in membership security game is a deterministic strategy. Therefore, the advantage for the learning algorithm $L$ is then defined as
\begin{align}\label{eq:adv_to_tv}
    \adv(L) = \sup_{\calA} X(\calA)-Y(\calA) 
    = \tv(X, Y),
\end{align}
where $\tv$ is the total variation distance.
% Note that in the case of a balanced prior, the accuracy of the adversary is:
% \begin{align}
%     \acc &= \frac{1}{2} \left( X(A) + Y(\overline{A}) \right) \\
%     &= \frac{1}{2} \left(1 + X(A) - Y(A) \right) \\
%     &= \frac{1}{2} + \frac{\adv}{2}
% \end{align}
Therefore, total variation distance gives us an upper bound on the advantage of any adversary. However, it is not clear how to calculate the total variation distance in general. In next subsection we discuss an approximation of total variation distance that can actually be calculated using existing techniques for RDP accounting.
\subsection{Bounding Membership Inference using Pinsker's inequality}

Using Equation \ref{eq:adv_to_tv} and directly applying Pinsker's inequality, we have:
\begin{align}
    \adv(L) \leq \tv(X,Y)\leq \sqrt{\frac{\kl{X}{Y}}{2}} = \lim_{\alpha \to 1} \sqrt{\frac{D_\alpha\infdivx{X}{Y}}{2}}
\end{align}
where $D_\alpha$ is the Renyi divergence at $\alpha$. Now one might ask why this bound is better than our bound using total variation distance. The reason we state this bound is that we have techniques for calculating the Renyi divergence of composition of adaptive and sampled Gaussian mechanisms. This enables us to calculate numerical upper bounds on the membership inference advantage of any adversary against adaptive composition of sampled Gaussian mechanisms, e.g. DP-SGD. 

To this end, we use RDP accounting to calculate the $D_\alpha$ for an $\alpha >1$. We know that for any $\alpha> 1$, $D_\alpha(X,Y)$ is greater than $\kl(X,Y)$ because $D_\alpha$ is increasing in $\alpha$. This means that for any $\alpha>1$ $D_\alpha$ will be a valid upper bound on the membership inference advantage and the bound becomes better as we decrease $\alpha$. 

In Figures \ref{fig:eps_adv} and \ref{fig:acc_adv} we calculate numerical upper bounds for membership inference for DP-SGD using typical parameters. We refer to this bound as the Pinsker bound. The figure shows that this bound obtains better numerical values compared to the bound of \citet{humphries2020differentially}. However, there are two main limitations with our Pinsker bound: 1) The bound is not tight. We are applying Pinsker's inequality which is not optimal for Gaussian mechanism.  2) It provides vacuous bounds in cases where $\kl{X}{Y}>2$. In next section, we will optimally bound the membership inference for composition of adaptive and sampled gaussian mechanisms.
\section{Membership Inference Bounds for Composition of Gaussian Mechanisms}\label{sec:membership_inference_bounds}

In this section, we show a tighter upper-bound for the adversary's advantage. 
Our main results are Theorem~\ref{thm:compgausnosub} and Theorem~\ref{thm:compgaussub}.
In particular, we will upper-bound the total variation between the \emph{transcripts} of the (subsampled) Gaussian mechanism, specifically the noisy gradients produced by the DP-SGD algorithm.
The upper bound on the result of the DP-SGD algorithm follows by application of post-processing.

The proof technique is the following: we show that the entire process of DP-SGD can be replaced by a process where each step is replaced by either $\cN(0, \sigma^2)$  or $\cN(r, \sigma^2)$ for the basic Gaussian mechanism, and replaced by $(1-q)\cN(0, \sigma^2) + q \cN(r, \sigma^2)$ for the subsampled Gaussian mechanism. 
Our proof technique relies on a modified version of $\tv$, called $\tva$.
\begin{definition}
For $a>0$ define
$\tva(X,Y)=\frac{1}{2}\int_\Omega |P(x) - a\cdot Y(x)| dx.$
\end{definition}

While we do not know an explicit form of the $\tva$ divergence between Gaussians, we can still prove that it increases with $\nrm{u_1-u_2}{2}$, as formalized in the following lemma. 
\begin{lemma}
\label{lem:monotonicity}
Let $X\equiv\cN(u_1,\sigma\cdot I_d)$ and $Y\equiv \cN(u_2,\sigma \cdot I_d)$. 
Then, for any $a\in \R^+$, $\tva(X,Y)$ is only a function of $\nrm{u_1-u_2}{2}$ and $\sigma$. 
Moreover this function is monotonically increasing with respect to $\nrm{u_1-u_2}{2}$. That is, $$\tva(X,Y)\leq \tva(\cN(0,\sigma), \cN(\nrm{u_1 - u_2}{2}, \sigma)).$$
\end{lemma}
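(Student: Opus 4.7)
The plan is to prove the two claims separately. The first, that $\tva(X,Y)$ depends only on $\|u_1 - u_2\|_2$ and $\sigma$, follows by a rotation-plus-dimension-reduction argument. Pick an orthogonal change of variables that sends $u_1 - u_2$ to $r e_1$ with $r = \|u_1 - u_2\|_2$, and translate so that $u_2$ becomes the origin. Because $\tva$ is a density integral and an orthogonal transformation has Jacobian determinant $1$, this transformation preserves $\tva$. After the reduction, both densities factor as products of a $1$D Gaussian in $x_1$ (with means $r$ and $0$ respectively, variance $\sigma^2$) times the same $(d-1)$-dimensional standard Gaussian density $g(x_2,\dots,x_d)$ in the orthogonal directions. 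Since $g$ is common to both, it pulls out of the absolute value, and integrating it against $1$ collapses the problem to
\[
\tva(X,Y) = \tfrac{1}{2}\int_{\R} \bigl| f_r(t) - a f_0(t) \bigr|\, dt,
\]
where $f_r$ is the density of $\cN(r,\sigma^2)$. This simultaneously shows dimension-independence and exhibits $\tva$ as an explicit function $g(r)$ of $r=\|u_1-u_2\|_2$.

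For monotonicity, I would work with this one-dimensional expression. The likelihood ratio $f_r(x)/f_0(x) = \exp((2xr - r^2)/(2\sigma^2))$ is strictly increasing in $x$ for $r>0$, so $f_r(x) > a f_0(x)$ exactly on $\{x > x^\star\}$ where $x^\star = \tfrac{\sigma^2 \log a}{r} + \tfrac{r}{2}$. Splitting the integral at $x^\star$ and writing everything in terms of the standard normal CDF $\Phi$, one gets the closed form
\[
g(r) = \tfrac{1-a}{2} + a\,\Phi\!\bigl(\tfrac{x^\star}{\sigma}\bigr) - \Phi\!\bigl(\tfrac{x^\star - r}{\sigma}\bigr).
\]
Differentiating in $r$ (using that $x^\star$ depends on $r$) and collecting terms gives two Gaussian density contributions with opposite signs. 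The key cancellation is that by the very definition of $x^\star$, $(x^\star)^2 - (x^\star - r)^2 = 2\sigma^2 \log a$, hence $a\,\phi(x^\star/\sigma) = \phi((x^\star - r)/\sigma)$. This identity makes the coefficients of $-\sigma \log a / r^2$ cancel in $g'(r)$, leaving
\[
g'(r) = \frac{1}{\sigma}\,\phi\!\bigl(\tfrac{x^\star - r}{\sigma}\bigr) > 0,
\]
which is the desired strict monotonicity for $r>0$; continuity then yields $g(r) \geq g(0)$.

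The main obstacle I anticipate is not the dimension reduction (which is essentially bookkeeping) but making the $g'(r)$ computation clean. Naively, both $\phi(x^\star/\sigma)$ and $\phi((x^\star-r)/\sigma)$ come multiplied by derivatives of $x^\star/\sigma$ and $(x^\star-r)/\sigma$ that contain an awkward $-\sigma\log a/r^2$ term, and it is not a priori obvious that the result has a sign. Recognizing the algebraic identity $a\phi(x^\star/\sigma) = \phi((x^\star-r)/\sigma)$ (which is simply the statement that $x^\star$ is where the weighted densities agree) is what collapses the derivative to a single positive Gaussian term and delivers the monotonicity.
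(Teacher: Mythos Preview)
Your proposal is correct and follows essentially the same route as the paper: reduce to one dimension by rotational invariance, locate the single crossing point $x^\star$ of $f_r - a f_0$, express $\tva$ via the normal CDF (the paper uses $\erf$), differentiate in $r$, and use the identity linking the two Gaussian density values at $x^\star$ to conclude positivity of the derivative. Your version is in fact slightly cleaner: the identity $a\,\phi(x^\star/\sigma)=\phi((x^\star-r)/\sigma)$ collapses $g'(r)$ directly to $\tfrac{1}{\sigma}\phi((x^\star-r)/\sigma)>0$, whereas the paper carries both exponential terms and argues the sign of a two-term sum under the extra WLOG assumption $a\in[0,1]$.
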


%\graham{It would be useful for the Lemma to be stated so that when it is invoked we can directly apply it. 
%That is, something like $\tva(X,Y)$ is upper bounded by $\tva(\cN(0,\sigma), \cN(\nrm{u_1 - u_2}{2}, \sigma)$.}\saeed{good point. Added this.}

%\begin{proof}
The proof is deferred to Appendix~\ref{sec:proof_monotonicity}. 
%\end{proof}

\subsection{Warm up: Without sampling}

\begin{figure}
    \centering
    \includegraphics[width=0.4\textwidth]{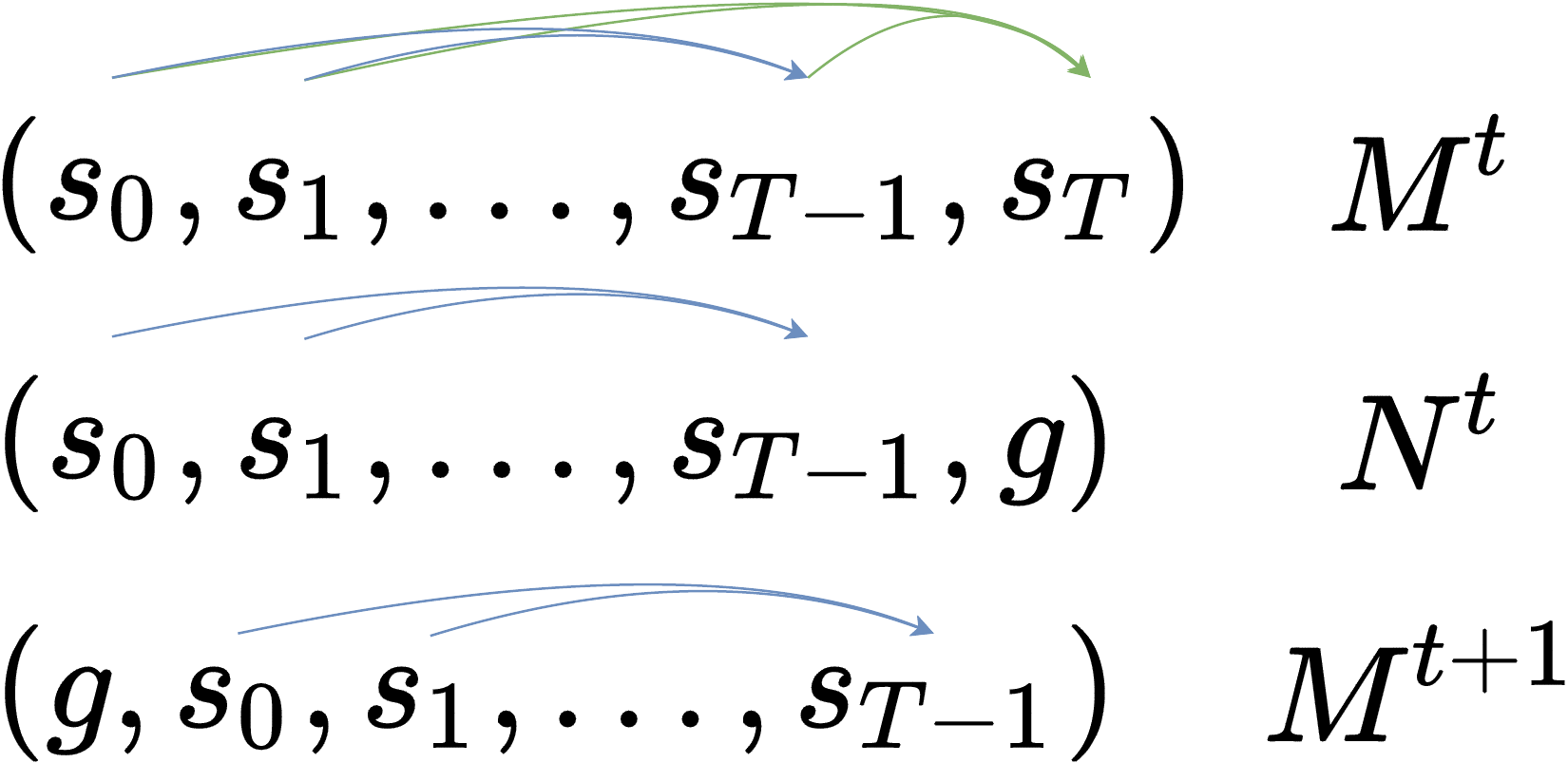}
    \caption{\label{fig:compgausnosub}
    Main component of the proof of Theorem~\ref{thm:compgausnosub}.
    The last step of DP-SGD is replaced by Gaussian noise (independent of previous steps).
    This step is then put at the start of the transcript,
    }
\end{figure}

\paragraph{Notation.} 
We use $s=(s_1,\dots, s_T)$ to denote the output (or \emph{transcript}) of a random process that consists of $T$ adaptive steps (typically the subsampled Gaussian mechanism). 
We use $\pfix{s}{t}$ to denote the first $t$ steps of the transcript of the random process.
The sampling rate $q$ is the probability of including any sample in the random set $S$. 
We use $S_i$ to denote the union of the support set of the $i$th step of the mechanism $M$ on all possible datasets. That is $S_i=\set{s_i; \exists D, s_i \in \Supp(M(D)_i)}.$ We also define $\pfix{S}{i}=S_1\times\dots,S_i$.  

\begin{theorem}[Gaussian Composition without sub-sampling]\label{thm:compgausnosub}
Let $M_1,\dots, M_T$ be a series of adaptive Gaussian Mechanisms with $L_2$ sensitivity $r$ and Gaussian noise with standard deviation $\sigma$. The membership inference risk of the composition of $M_i$'s is at most as much as a single Gaussian mechanism with sensitivity $\sqrt{T}\cdot r$ and standard deviation $\sigma$.
\begin{equation}
    \tv(X, Y) \leq \tv(\cN(0, \sigma I_T), \cN(r 1_T, \sigma I_T))
\end{equation}
\end{theorem}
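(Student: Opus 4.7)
The plan is to prove the bound by induction on $T$. First I would reparametrize the transcript $(s_1,\dots,s_T)$ by its underlying noise vector $(\eta_1,\dots,\eta_T)$ defined by $\eta_t := s_t - \phi_t(\pfix{s}{t-1}, D)$, where $\phi_t$ denotes the query at step $t$. This map is a bijection (for fixed $D$), so by the post-processing property of TV stated in the background the divergence is unchanged. Under $X$ (the $b=0$ branch) the $\eta_t$'s become i.i.d.\ $\cN(0,\sigma^2 I_d)$, whereas under $Y$ they satisfy $\eta_t \mid \eta_{<t} \sim \cN(v_t(\eta_{<t}),\sigma^2 I_d)$ for adaptive shifts $v_t$ of norm at most $r$. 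The base case $T=1$ is then immediate from Lemma~\ref{lem:monotonicity}, since a single-step shift of norm $\le r$ is TV-dominated by the canonical shift of norm $r$.

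For the inductive step I would follow the reduction sketched in Figure~\ref{fig:compgausnosub}: peel off the last coordinate by ``replacing'' its adaptive Gaussian with a noise term that is independent of the history, and then ``move it to the start'' of the transcript. Concretely, using the rotational symmetry of $\cN(0,\sigma^2 I_d)$ and Lemma~\ref{lem:monotonicity}, I would rotate the $T$-th coordinate so that the adaptive shift $v_T(\eta_{<T})$ points in a canonical direction and then inflate it to have norm exactly $r$, which only increases the divergence. After this move, the $T$-th coordinate becomes a fixed one-dimensional Gaussian comparison $\cN(0,\sigma)$ vs $\cN(r,\sigma)$, and the first $T-1$ coordinates still form an adaptive Gaussian composition that the inductive hypothesis bounds by $\tv(\cN(0,\sigma I_{T-1}), \cN(r 1_{T-1}, \sigma I_{T-1}))$. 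Reassembling and invoking rotational invariance of high-dimensional Gaussians, $\tv(\cN(0,\sigma I_T),\cN(r 1_T, \sigma I_T))$ collapses to a single one-dimensional Gaussian comparison with shift $\sqrt{T}\, r$, which is the claimed bound. The extension to the actual DP-SGD output then follows by one final post-processing step.

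The main obstacle is that $v_T(\eta_{<T})$ genuinely depends on the history under $Y$, so the ``move to the start'' step is not a straightforward product decomposition; a naive per-step triangle inequality would yield a linear-in-$T$ composition bound and wash out the critical $\sqrt{T}$ Gaussian saving. This is exactly what the $\tva$ divergence is engineered to handle. Rather than tracking $\tv$ alone through the induction, I would strengthen the inductive statement to a joint bound on $\tva$ for a free parameter $a>0$, chosen step-by-step so that the Gaussian likelihood-ratio factor arising from the history-dependent shift is absorbed into the weighting $a$. The monotonicity of $\tva$ between Gaussians in the mean-shift norm (Lemma~\ref{lem:monotonicity}) is precisely what then guarantees that inflating every adaptive shift to the worst-case norm $r$ is safe at each step, and the resulting non-adaptive composition yields exactly the target bound $\tv(\cN(0, \sigma I_T), \cN(r 1_T, \sigma I_T))$.
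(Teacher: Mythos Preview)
Your proposal is correct and follows essentially the same route as the paper: both arguments peel off the last adaptive step, use the $\tva$ decomposition to absorb the history likelihood ratio $\Pr[\pfix{Y}{T-1}=\pfix{s}{T-1}]/\Pr[\pfix{X}{T-1}=\pfix{s}{T-1}]$ into the weighting parameter, invoke Lemma~\ref{lem:monotonicity} to replace the adaptive shift by the worst-case canonical shift of norm $r$, and then move the now-independent coordinate to the front. The only cosmetic differences are that the paper phrases this as a hybrid argument $\tv(M^i(D),M^i(D'))\le\tv(M^{i+1}(D),M^{i+1}(D'))$ rather than as an induction on $T$ with a $\tva$-strengthened hypothesis, and does not perform your preliminary noise reparametrization $\eta_t=s_t-\phi_t(\pfix{s}{t-1},D)$; neither of these changes the substance of the proof.
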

\begin{proof}
Let $M^i(D)$ be the mechanism that works on $D$ (resp. $D'$) and consists of $i$ Gaussians $\cN(0, \sigma^2)$ (resp. $\cN(r, \sigma^2)$) steps followed by $T-i$ steps of DP-SGD applied to $D$ (resp. $D'$).
Specifically, the mechanism $M^0$ corresponds to DP-SGD, and $M^T$ to pure Gaussians. 
We will show in this proof that 
\begin{equation}
    \tv(M^i(D), M^i(D')) \leq \tv(M^{i+1}(D), M^{i+1}(D'))
\end{equation}
and hence 
\[\tv(X, Y) = \tv(M^0(D), M^0(D')) \leq \tv(M^T(D), M^T(D')) = \tv(\cN(0, \sigma^2 I_T), \cN(r 1_T, \sigma^2 I_T).\]

To this end, we will first argue that the very final step of the mechanism $M^i$ can be replaced with a Gaussian step without increasing the total variation distance.  
Then we can move this noise to the start of the process without affecting the result, obtaining $M^{i+1}$. 
%Repeating this argument means we can replace all steps with Gaussian noise. 
Figure~\ref{fig:compgausnosub} illustrates this procedure. 

Let us fix a step $i$ and let $s = (s_1, \dots, s_T)$ be a transcript. 
Denoting $X \sim M^i(D)$ and $Y \sim M^i(D')$, we have

%\graham{Somehow writing $\sum_{\pfix{s}{T}}$ does not seem very clear, since what is the sum actually being taken over when $s$ is fixed?  The term $\Pr[X = \pfix{s}{T}]$ already seems to be capturing this? }\saeed{You're right. Fixed it by defining capital $S_i$, can you check?}

\begin{align*}
2\tv (M^i(D)&, M^i(D')) =\sum_{\pfix{s}{T}\in \pfix{S}{T}} |\Pr[X=\pfix{s}{T}] - \Pr[Y=\pfix{s}{T}]|\\
&=\sum_{\pfix{s}{T-1}\in \pfix{S}{T-1}} \sum_{s_T\in S_T}\left|\Pr[X=\pfix{s}{T}] - \Pr[Y=\pfix{s}{T}]\right|\\
&=\sum_{\pfix{s}{T-1}\in \pfix{S}{T-1}} \sum_{s_T\in S_T}\left|\Pr[X_T=s_T \mid \pfix{s}{T-1}]\cdot \Pr[\pfix{X}{T-1}=\pfix{s}{T-1}]\right. \\
& \qquad \qquad \qquad \left. - \Pr[Y_T=s_T \mid \pfix{s}{T-1}]\Pr[\pfix{Y}{T-1}=\pfix{s}{T-1}]\right|.\\
&=\sum_{\pfix{s}{T-1}\in\pfix{S}{T-1}} \Pr[\pfix{X}{T-1}=\pfix{s}{T-1}]\sum_{s_T\in S_T} \bigg| \Pr[X_T=s_T \mid \pfix{s}{T-1}] \\
&\qquad \qquad \qquad - \Pr[Y_T=s_T \mid \pfix{s}{T-1}]\frac{\Pr[\pfix{Y}{T-1}=\pfix{s}{T-1}]}{\Pr[\pfix{X}{T-1}=\pfix{s}{T-1}]} \bigg|. \numberthis \label{eq:tva_derivation}
\end{align*}
We know that the last ($T$'th) step of $M^i(D)$ (resp. $M^i(D')$) follow isotropic Gaussian distributions centered around two points $u_1$ and $u_2$ such that $\nrm{u_1-u_2}{2} \leq r$ and with standard deviation $\sigma$. 
These centers could be chosen adaptively according to the history of the mechanism. 
We use $a(\pfix{s}{T-1})$ to denote $\frac{\Pr[\pfix{Y}{T-1}=\pfix{s}{T-1}]}{\Pr[\pfix{X}{T-1}=\pfix{s}{T-1}]}$. 
By Lemma \ref{lem:monotonicity} we have 
\begin{align*} %\label{eq:invariance_gaussian}
\sum_{s_T\in S_T}\Big|\Pr[X_T=s_T \mid \pfix{s}{T-1}] - \Pr[Y_T=s_T \mid \pfix{s}{T-1}]&a(\pfix{s}{T-1})\Big| \\
\leq~ & 2\tv_{a(\pfix{s}{T-1})}(\cN(0,\sigma^2), \cN(r,\sigma^2)).
\end{align*}

Denoting by $N^i$ the mechanism that coincides with $M^i$ for the first $T-1$ steps and is replaced by a Gaussian at the last ($T$'th) step, we thus have, by following Equation~\ref{eq:tva_derivation} in the reverse direction
\begin{align}
    2\tv(M^i(D), M^i(D')) \leq 2\tv(N^i(D), N^i(D')).
\end{align}
Given that the last step of $N^i$ does not depend on the first $T-1$ steps, we can permute to put it in first position (see Figure \ref{fig:compgausnosub}), which shows that $\tv(N^i(D), N^i(D')) = \tv(M^{i+1}(D), M^{i+1}(D'))$.
\end{proof}

\subsection{With sampling}
\paragraph{Notation.} 
We use $B(q)$ to denote a Bernoulli random variable that is equal to $1$ with probability $q$ and $0$ with probability $1-q$. 
We use $B(q)^n$ to denote an $n$ dimensional random variable where each coordinate is independent and distributed as $B(q)$. 
For $r\in \R$, we use $r\cdot B(q)^n$ to denote the random variable that is sampled by the following process: first sample from $B(q)^n$ and then multiply by $r$. 
We also  use $\cN(r\cdot B(q)^n,\sigma)$ to denote an $n$ dimensional random variable that is distributed according to mixture of Gaussians all of which have standard deviation $\sigma$ and centers are chosen at random from $r\cdot B(q)^n.$ 
\begin{theorem}[Gaussian Composition with sub-sampling]\label{thm:compgaussub}
Let $M_1,\dots, M_T$ be a series of adaptive Gaussian Mechanisms with $L_2$ sensitivity $r$ and Gaussian noise with standard deviation $\sigma$ and sub-sampling rate $q$. The membership inference risk of the composition of $M_i$'s is at most
$$ \tv\big(\cN(0, \sigma), \cN(r\cdot B(q)^n,\sigma)\big)$$
\end{theorem}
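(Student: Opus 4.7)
I would mirror the hybrid-mechanism argument of Theorem \ref{thm:compgausnosub}. Define $M^i$ to be the mechanism whose first $i$ coordinates are independent draws from the ``canonical'' one-step sub-sampled Gaussian (namely $\cN(0,\sigma^2)$ on input $D$, and $(1-q)\cN(0,\sigma^2)+q\cN(r,\sigma^2)$ on input $D'$), and whose remaining $T-i$ coordinates are produced by running the real sub-sampled Gaussian mechanisms $M_{i+1},\dots,M_T$ adaptively. Then $M^0$ is the true adaptive process (so $\tv(M^0(D),M^0(D'))=\tv(X,Y)$), while the transcripts of $M^T$ are exactly $\cN(0,\sigma I_T)$ and $\cN(r\cdot B(q)^T,\sigma I_T)$. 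If I can show $\tv(M^i(D),M^i(D'))\le \tv(M^{i+1}(D),M^{i+1}(D'))$ for every $i$, chaining and post-processing yield the theorem.

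\textbf{The inductive step.} Fix $i$ and repeat the conditional decomposition \eqref{eq:tva_derivation} verbatim: conditioning on the prefix $\pfix{s}{T-1}$, I need to bound
\[
\sum_{s_T\in S_T}\big|\Pr[X_T=s_T\mid \pfix{s}{T-1}] - a(\pfix{s}{T-1})\Pr[Y_T=s_T\mid \pfix{s}{T-1}]\big|,
\]
where $a(\pfix{s}{T-1})$ is the prefix likelihood ratio. Under $X$ the conditional last step is a pure Gaussian $\cN(u_1,\sigma^2)$ (because $z'\notin D$), while under $Y$ it is the mixture $(1-q)\cN(u_1,\sigma^2)+q\cN(u_2,\sigma^2)$ with $u_1,u_2$ chosen adaptively from $\pfix{s}{T-1}$ and satisfying $\nrm{u_1-u_2}{2}\le r$. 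Translating by $-u_1$, the sum equals $2\tva\big(\cN(0,\sigma^2),\,(1-q)\cN(0,\sigma^2)+q\cN(u_2-u_1,\sigma^2)\big)$ with $a=a(\pfix{s}{T-1})$.

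\textbf{Main obstacle: a mixture monotonicity lemma.} The step that does not come for free from the no-sampling proof is a mixture analogue of Lemma \ref{lem:monotonicity}: for every $a>0$, $q\in[0,1]$, and $\sigma>0$,
\[
\tva\big(\cN(0,\sigma^2 I_d),\,(1-q)\cN(0,\sigma^2 I_d)+q\cN(\Delta,\sigma^2 I_d)\big)
\]
depends only on $\nrm{\Delta}{2}$ and is non-decreasing in $\nrm{\Delta}{2}$. Rotational invariance of isotropic Gaussians reduces the claim to $d=1$. For the monotonicity I would follow the same strategy as in the proof of Lemma \ref{lem:monotonicity} (Appendix \ref{sec:proof_monotonicity}): differentiate the $L^1$ integrand with respect to $\nrm{\Delta}{2}$ and argue that the sign-pattern of $p(x)-a\,q(x)$ (where $p$ is a single Gaussian and $q$ is a two-component mixture) keeps the derivative non-negative. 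The mixture introduces an extra Gaussian peak which shifts the zero-crossings of the integrand, so the sign analysis is more delicate than in the pure-Gaussian case; I expect this to be the main technical hurdle.

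\textbf{Finishing up.} Given the mixture monotonicity, the conditional sum above is bounded by $2\tva\big(\cN(0,\sigma^2),(1-q)\cN(0,\sigma^2)+q\cN(r,\sigma^2)\big)$, uniformly in $\pfix{s}{T-1}$. Reassembling \eqref{eq:tva_derivation} in reverse (as in the proof of Theorem \ref{thm:compgausnosub}) shows that the last step of $M^i$ can be replaced by an independent canonical mixture step without decreasing TV; because that replacement step is now independent of the prefix, it can be permuted to the first coordinate, turning $M^i$ into $M^{i+1}$. Iterating from $i=0$ to $i=T$ gives $\tv(X,Y)\le \tv\big(\cN(0,\sigma I_T),\cN(r\cdot B(q)^T,\sigma I_T)\big)$, and post-processing from transcript to model output yields the stated bound.
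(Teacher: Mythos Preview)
Your hybrid-mechanism skeleton is exactly the paper's: condition on the prefix via \eqref{eq:tva_derivation}, replace the last real step by a canonical independent step without decreasing $\tv$, permute it to the front, and iterate. The one substantive divergence is how you handle the ``main obstacle.''

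You propose to prove a \emph{new} mixture monotonicity lemma---that
\[
\tva\big(\cN(0,\sigma^2),\,(1-q)\cN(0,\sigma^2)+q\cN(\Delta,\sigma^2)\big)
\]
is non-decreasing in $\nrm{\Delta}{2}$---by repeating the derivative/sign analysis of Appendix~\ref{sec:proof_monotonicity} with an extra Gaussian bump. That lemma is true, and your route would succeed, but the paper sidesteps the delicate sign analysis entirely with the one-line identity of Lemma~\ref{sub-sampling_tva}: for $X'=(1-q)Y+qX$,
\[
\tva(X',Y)=q\,\tv_{(a+q-1)/q}(X,Y).
\]
Applying this with $Y=\cN(0,\sigma^2)$ and $X=\cN(\Delta,\sigma^2)$ reduces the mixture--versus--Gaussian $\tva$ to a pure Gaussian--versus--Gaussian $\tv_{a'}$ with a shifted parameter $a'=(a+q-1)/q$, so Lemma~\ref{lem:monotonicity} applies verbatim and gives your mixture monotonicity for free. (When $a'\le 0$ the quantity $\tv_{a'}$ is constant, so monotonicity is trivial.) After bounding, the paper reapplies Lemma~\ref{sub-sampling_tva} in reverse to recover $\tva\big(\cN(0,\sigma^2),(1-q)\cN(0,\sigma^2)+q\cN(r,\sigma^2)\big)$, which is precisely your target bound.

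In short: your plan is correct, but what you flag as ``the main technical hurdle'' dissolves via Lemma~\ref{sub-sampling_tva}; there is no need to redo the $\erf$ calculus for mixtures.
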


%\graham{Define $B(q)$ as Bernoulli with parameter $q$}\saeed{defined it before stating the theorem.}

\begin{lemma}\label{sub-sampling_tva}
Let $X'\equiv (1-q)\cdot Y + q\cdot X$ then we have
$$\tva(X',Y)= q\tv_{\frac{a+q-1}{q}}(X,Y)$$
\end{lemma}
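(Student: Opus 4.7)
The plan is to prove this by direct substitution into the definition of $\tva$. Writing the density of $X'$ as $X'(x) = (1-q)Y(x) + qX(x)$ and plugging into the definition,
\[
\tva(X',Y) \;=\; \tfrac{1}{2}\!\int_\Omega \bigl|X'(x) - a\cdot Y(x)\bigr|\,dx
\;=\; \tfrac{1}{2}\!\int_\Omega \bigl|(1-q)Y(x) + qX(x) - a\cdot Y(x)\bigr|\,dx.
\]

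Next I would simply collect the $Y(x)$ terms. This gives $qX(x) - (a - (1-q))Y(x) = qX(x) - (a+q-1)Y(x)$ inside the absolute value. Factoring out $q$ (which is nonnegative so it pulls out of $|\cdot|$),
\[
\tva(X',Y) \;=\; \tfrac{q}{2}\!\int_\Omega \left|X(x) - \tfrac{a+q-1}{q}\,Y(x)\right|dx \;=\; q\,\tv_{\frac{a+q-1}{q}}(X,Y),
\]
which is exactly the claimed identity.

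There is no real obstacle here: the statement is essentially a change-of-variables identity expressing that mixing $Y$ into $X'$ amounts to shifting the parameter $a$ of the $\tva$ functional and rescaling by $q$. The only thing to be careful about is the sign of the coefficient when pulling $q$ out of the absolute value (fine since $q\in[0,1]$, in particular $q\geq 0$) and checking that the definition of $\tva$ in the paper uses the density of the first argument with coefficient $1$ and of the second argument with coefficient $a$ (which is the convention used in the preceding proof). Once those conventions are confirmed, the lemma is a one-line algebraic rearrangement.
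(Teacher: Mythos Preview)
Your proof is correct and is essentially identical to the paper's own proof: both substitute the mixture density into the definition of $\tva$, collect the $Y(x)$ terms to obtain $qX(x)-(a+q-1)Y(x)$, and factor the nonnegative constant $q$ out of the absolute value. There is nothing to add.
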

\begin{proof}
We have 
\begin{align*}
    2\tva(X',Y)= \int_{\Omega}\left|X'(x)-aY(x)\right|dx 
    &= \int_{\Omega}\left|qX(x)-(q+a-1)Y(x)\right|dx\\
    &=q\int_{\Omega}\left|X(x)-\frac{(q+a-1)}{q}Y(x)\right|dx \\
    &=2q\tv_{\frac{a+q-1}{q}}(X,Y).
\end{align*}
\end{proof}

\begin{proof}[Proof of Theorem \ref{thm:compgaussub}]
The proof steps are similar to that of Theorem \ref{thm:compgausnosub}. First, we have
%\graham{Why the switch from $T$ in previous theorem to $n$ in this one? }\saeed{Fixed it. Also added the $S_T$ in the sum.}

\begin{align*}
2\tv(X,Y) 
&=\sum_{\pfix{s}{T-1}\in \pfix{S}{T-1}} \Pr[\pfix{X}{T-1}=\pfix{s}{T-1}] \cdot\\
& \qquad \left(\sum_{s_T}\Big|\Pr[X_T=s_T \mid \pfix{s}{T-1}]- \Pr[Y_T=s_T \mid \pfix{s}{T-1}]\frac{\Pr[\pfix{Y}{T-1}=\pfix{s}{T-1}]}{\Pr[\pfix{X}{T-1}=\pfix{s}{T-1}]}\Big|\right)\\
&=2\sum_{\pfix{s}{T-1}\in \pfix{S}{T-1}} \Pr[\pfix{X}{T-1}=\pfix{s}{T-1}] \tv_{a(\pfix{s}{T-1})}(X_T\mid \pfix{s}{T-1},Y_T\mid \pfix{s}{T-1}) .
\end{align*}
But since $X_T$ and $Y_T$ are subsampled Gaussian mechanisms we have $X_T\equiv (1-q)Y_T + qX'_T$ where $Y$ and $X'$ are mixtures of Gaussians. Therefore, by Lemma \ref{lem:monotonicity} and Lemma \ref{sub-sampling_tva} we have
\begin{align*}
&\tv(X,Y) \\
&~~= \sum_{\pfix{s}{T-1}\in \pfix{S}{T-1}} \Pr[\pfix{X}{T-1}=\pfix{s}{T-1}\in \pfix{S}{T-1}] q\tv_{\frac{a(\pfix{s}{T-1})+q-1}{q}}(X'_T\mid \pfix{s}{T-1}, Y_T\mid \pfix{s}{T-1})\text{~~(By Lemma \ref{sub-sampling_tva})}\\
&~~\leq \sum_{\pfix{s}{T-1}\in \pfix{S}{T-1}} \Pr[\pfix{X}{T-1}=\pfix{s}{T-1}\in \pfix{S}{T-1}] q\tv_{\frac{a(\pfix{s}{T-1})+q-1}{q}}(\cN(0,\sigma),\cN(r,\sigma))\text{~~(By Lemma \ref{lem:monotonicity})}\\
&~~= \sum_{\pfix{s}{T-1}\in \pfix{S}{T-1}} \Pr[\pfix{X}{T-1}=\pfix{s}{T-1}] \tv_{a(\pfix{s}{T-1})}(\cN(0,\sigma),(1-q)\cN(0,\sigma) + q\cN(r,\sigma))\text{~~(By Lemma \ref{sub-sampling_tva})}\\
&~~= \sum_{\pfix{s}{T-1}\in \pfix{S}{T-1}} \Pr[\pfix{X}{T-1}=\pfix{s}{T-1}] \tv_{a(\pfix{s}{T-1})}(\cN(0,\sigma),\cN(r\cdot B(q),\sigma) ).
\end{align*}

%\graham{The use of lemma 7 seems incorrect here --- shouldn't it be
%$q \tv_{\frac{a(\pfix{s}{T-1})+q-1}{q}}(\cN(0,\sigma),\cN(r,\sigma))$?}
%\graham{The progression from line 1 to line 2 was unclear to me. }\saeed{Very nice catch! I fixed it and split it into two steps.}

Therefore, we can replace $X_T$ with a mixture of two Gaussians centered at $0$ and $r$ and $Y_T$ with a single Gaussian centered at $0.$ Now we can use the same technique used in proof of Theorem \ref{thm:compgausnosub} and move $X_T$ and $Y_T$ to the first round and repeat this process. 
At the end, $Y$ will turn to a $n$-dimensional Gaussian centered at $0$ and standard deviation $\sigma$ and  $X$ will be a mixture of Gaussians with center randomly selected according to a $n$-dimensional Bernoulli distribution with probability $q$. 
That is, the advantage is bounded by 
% \graham{with certainty or with high probability?}
% \saeed{This is with certainty. Does the changes in the proof clarify this? Previously we had some inconsistencies in the proof but it should be all good now.}
$$\tv(\cN(0^n, \sigma),\cN(rB(q)^n,\sigma))$$
% which is in turn, is bounded by
% $$\sum_{i=1}^{T} {n \choose i} q^i (1-q)^{n-i} \tv(\cN(0, \sigma), \cN(\sqrt{i}r,\sigma)).$$\saeed{Should we remove this last line.}
\end{proof}
% \begin{remark}
% As shown in the proof of Theorem \ref{thm:compgaussub}, $\tv(\cN(0^n, \sigma),\cN(rB(q)^n,\sigma))$, where $\cN(rB(q)^n,\sigma)$ is a mixture of Gaussian with  center randomly sampled according to a multivariate Bernoulli,  is a tighter bound that could be approximated numerically to obtain better bound on advantage.
% \end{remark}

\begin{remark}
Theorem \ref{thm:compgausnosub} could be simply extended to composition of Gaussian mechanisms with varying noise levels. However, if the noise levels are selected adaptively, the proof is not clear. We leave the composition of Gaussians with adaptive noise selection as an open question.
\end{remark}
\begin{remark}
Although Theorem \ref{thm:compgausnosub} is stated only for Gaussian mechanism, the Theorem extends to any mechanism that satisfies monotonicity under $\tv_a$ according to some notion of sensitivity. For example, if one can show that $\tv_a(\mathcal{L}(0,\sigma), \mathcal{L}(u,\sigma))$ is monotonically increasing with respect to $|u|_1$, then Theorem \ref{thm:compgausnosub} extends to composition of Laplace mechanisms with bounded $\ell_1$ sensitivity.
\end{remark}

\subsection{Numerical computation}

In order to numerically approximate the upper-bound, we first convert the notion of $\tv$ into a expectation formulation as follows:

\begin{align}
    \tv(X, Y) &= \int_\Omega \left(X(t) - Y(t)\right) \mathbbm{1}\left(Y(t) \leq X(t)\right) dt \\
    &= \int_\Omega \left(1 - \frac{Y(t)}{X(t)}\right) \mathbbm{1}\left(Y(t) \leq X(t)\right) X(t) dt \\
    &= \E_{t \sim X} \left( \left(1 - \frac{Y(t)}{X(t)}\right) \mathbbm{1}\left(Y(t) \leq X(t)\right) \right)
\end{align}

Note that this expectation is over distribution $X$. So we can sample a dataset from $X$ and approximate this expectation using empirical averaging (or Monte-Carlo sampling):
\begin{align}
    \tv(X, Y) \approx \frac{1}{m} \sum_{i=1}^m  \left(1 - \frac{Y(t_i)}{X(t_i)}\right) \mathbbm{1}\left(Y(t_i) \leq X(t_i)\right)
\end{align}
We know that Monte-Carlo estimation of this expectation using is very precise because the quantity is bounded between $0$ and $1$. 

Note that in-order to calculate this, we need to calculate $Y(t_i)/X(t_i)$ and that is possible because we have the mathematical form of the probability distribution function of for $Y$ and $X$. In Figures \ref{fig:eps_adv} and \ref{fig:acc_adv} we calculate the upper bound using this Monte-Carlo simulation for typical settings in DP-SGD.

\section{Conclusion}
In this paper, we directly analyzed membership inference bounds for composition of adaptive sampled Gaussian mechanisms. Our analysis enables us to obtain bounds that are much better that one can obtain by converting differential privacy guarantees to membership inference guarantees. Our analysis shows that although differential privacy guarantees might sometimes large membership inference guarantees, but the mechanisms that obtain differential privacy can be in fact much more secure against membership inference attacks. Previously, this phenomenon was observed for DP-SGD and here for the first time we prove it. 

Our analysis is the first to directly analyze membersihp inference bounds. We limited our study to membership inference attacks against sampled Gaussian mechanisms as DP-SGD is the most used differential private learning algorithm. But this kind of membership inference analysis could be potentially done for other mechanisms and algorithm. We leave this for future work.

We also note that The parameters in DP-SGD that achieve optimal membership privacy versus utility might be different than that of differential privacy. Our new analysis opens up the possibility of a systematic search for optimal hyper parameters to obtain optimal utility for a given upper bound on membership inference advantage.

% \section{Experiments}
% \alex{Saeed: Experiments with CIFAR and various models trained with DP}

\clearpage
\bibliography{main}
\bibliographystyle{plainnat}

\clearpage
\appendix

\section{Proof of Lemma~\ref{lem:monotonicity}}
\label{sec:proof_monotonicity}

% \begin{lemma}
% Let $P\equiv\cN(u_1,\sigma\cdot I_d)$ and $Q\equiv \cN(u_2,\sigma \cdot I_d)$. Then, for any $a\in \R^+$, $\tva(P,Q)$ is only a function of $\nrm{u_1-u_2}{2}$ and $\sigma$. Moreover this function is monotonically increasing with respect to $\nrm{u_1-u_2}{2}$.
% \end{lemma}

\begin{proof}
The first part follows by the symmetry of isotropic Gaussian. For the second part (monotonicity) we use the definition of  $\tva$. Without loss of generality we can assume $a\in [0,1]$ as otherwise we can work with $\tva(P,Q)/a = \tv_{1/a}(Q,P)$. 
Let $r=\nrm{u_1-u_2}{2}$. We can show that the derivative of the integral is always positive. In the following calculations, we use $c_1, c_2, c_3$ and $c_4$ to denote positive constants that are independent of $r$.
\newcommand{\midpoint}{\frac{r^2-2\sigma^2\ln(a)}{2r}}
\newcommand{\midpointsigma}{\frac{r^2-\ln(a)\sigma^2}{2\sqrt{2}\sigma r}}
\newcommand{\midpointsigmar}{\frac{-r^2-\ln(a)\sigma^2}{2\sqrt{2}\sigma r}}

First note that $x^*=\midpoint$ is a middle point where $e^{-\frac{x^2}{2\sigma^2}} - ae^{{-\frac{(x-r)^2}{2\sigma^2}}}$ goes from positive to negative as $x$ increases. 
By our assumption that $a \in [0, 1]$, we have that $x^* > 0$. 
Recalling that $\erf(z) = \frac{2}{\sqrt{\pi}} \int_0^{z} \exp(-t^2) dt$, and that
$\erf(\infty) = 1$ so that (by symmetry) $\frac{2}{\sqrt{\pi}} \int_{-\infty}^{0} \exp(-t^2) dt = 1$, we can write
\begin{align*}
    \tva(P,Q)&= c_1\left(\int_{-\infty}^{\infty} \left|e^{-\frac{x^2}{2\sigma^2}} - ae^{{-\frac{(x-r)^2}{2\sigma^2}}}\right|dx\right)\\
    &=c_1\left(\int_{-\infty}^{x^*}e^{-\frac{x^2}{2\sigma^2}} - ae^{-\frac{(x-r)^2}{2\sigma^2}} + \int_{x^*}^{\infty}     ae^{-\frac{(x-r)^2}{2\sigma^2}} - e^{-\frac{x^2}{2\sigma^2}}\right)\\
    & = c_1 \left( 1 + \erf\left(x^*/\sqrt{2}\sigma\right) - a \erf( (x^* - r)/\sqrt{2}{\sigma}) \right) \\
    &~~+ \left(a(1 - \erf\left((x^* - r)/\sqrt{2}{\sigma}\right) + (1 - \erf\left(x^*/\sqrt{2}\sigma\right) \right) \\
    &=c_2 \left(\erf\left(\midpointsigma\right)+ 1 - a\erf\left(\midpointsigmar \right) - a\right) .
\end{align*}
Now, let $f_1(r) =\erf\left(\midpointsigma\right)$ and $f_2(r)=- a\erf\left(\midpointsigmar \right)$.
Taking the derivative with respect to $r$ we have

$$\frac{\partial f_1}{\partial r} = c_3\left(\frac{1}{2\sqrt{2}\sigma}+\frac{\ln(a)\sigma}{2\sqrt{2}r^2}\right)e^{-\left(\midpointsigma\right)^2}$$
$$\frac{\partial f_2}{\partial r} = c_3 a \left(\frac{1}{2\sqrt{2}\sigma}-\frac{\ln(a)\sigma}{2\sqrt{2}r^2}\right)e^{-\left(\midpointsigmar\right)^2}$$
\newcommand{\var}{e^{\frac{-r^4 - \ln(a)^2\sigma^4}{8\sigma^2r^2}}}
Now note that we have $e^{-\left(\midpointsigma\right)^2} = a^{1/2}\cdot e^{-\left(\midpointsigmar\right)^2}$. 
Therefore, we have
$$c_4\frac{\partial \tva }{\partial r} = e^{-\left(\midpointsigmar\right)^2}\cdot\left(\frac{1+\sqrt{a}}{2\sqrt{2}\sigma} + \frac{\ln(a)\left(\sqrt{a}-1\right)\sigma}{2\sqrt{2}r^2}\right).$$
Now since $a\in [0,1]$, we have $\ln(a)\leq 0$ and $\sqrt{a}-1<0$. Which means the term $\frac{1+\sqrt{a}}{2\sqrt{2}\sigma} + \frac{\ln(a)(\sqrt{a}-1)\sigma}{2\sqrt{2}r^2}$ is positive. This implies that the whole gradient is positive.

\end{proof}

\section{Membership inference precision}

In this section, we refine the analysis of \citet{sablayrolles2019white} for the accuracy of a membership attack.

\paragraph{Upper-bound on precision.} 
Let us first derive a bound on the precision of membership inference.
We assume that there are two datasets $D$ and $D'$ and that a differentially-private mechanism $\calM$ trains a model represented by $\theta$. 

With probability $(1 - \delta$) over the choice of $\theta$, we have:
\begin{align}
    -\epsilon \leq \log \left( \frac{\prob(M(D)=\theta)}{\prob(M(D')=\theta)} \right) \leq \epsilon
\end{align}

Given that there is a balanced prior $\prob(D) = \prob(D')$, using Bayes rule, we have:
\begin{align}
    \prob(D~|~\theta) &= \frac{\prob(M(D)=\theta) \prob(D)}{\prob(M(D)=\theta) \prob(D) + \prob(M(D')=\theta) \prob(D')} \\
    &= \frac{\prob(M(D)=\theta)}{\prob(M(D)=\theta) + \prob(M(D')=\theta) } \\
    &= \sigma \left( \log \left( \frac{\prob(M(D)=\theta)}{\prob(M(D')=\theta)} \right) \right),
\end{align}
with $\sigma(u) = 1 / (1 + \exp(-u))$ the sigmoid function. 

Hence the precision $\prob(D~|~\theta)$ is bounded between $\sigma(-\epsilon)$ and $\sigma(\epsilon)$, as $\sigma(\cdot)$ is non decreasing.

\paragraph{Upper-bound on attack accuracy.}

The accuracy of the Bayes classifier is 
\begin{align}
    \operatorname{Acc} = \max(\prob(D~|~\theta), 1 - \prob(D~|~\theta)),
\end{align}

and thus
\begin{align}
    \operatorname{Acc} &\leq \max(\sigma(\epsilon), \sigma(-\epsilon)) \\
    &= \sigma(\epsilon)
\end{align}

This means that the attack accuracy is bounded by $\sigma(\epsilon)$ with probability $1 - \delta$.
Empirically, we see that the sigmoid function closely matches the bound given by \citet{humphries2020differentially}.
Simply stated, this derivation shows that the bound proven by \citet{humphries2020differentially} actually holds with probability $ 1 - \delta$ instead of on average.

\end{document}